\documentclass[a4paper,twocolumn,superscriptaddress,11pt,accepted=2018-12-13]{quantumarticle}
\pdfoutput=1

\usepackage[utf8]{inputenc}
\usepackage{textcomp}
\usepackage{bbm}

\usepackage[numbers,sort&compress]{natbib}

\usepackage{amsfonts}
\usepackage{amsmath}
\usepackage{amsthm}
\usepackage{braket}
\usepackage{hyperref}
\usepackage{epsfig,graphics,graphicx}
\usepackage{subfig}

\usepackage{xcolor}
\definecolor{quantumviolet}{HTML}{53257F}

\hypersetup{citecolor=quantumviolet}
\hypersetup{colorlinks=true}
\hypersetup{linkcolor=quantumviolet}
\hypersetup{urlcolor=quantumviolet}

\newcommand{\minev}{\lambda_\mathrm{min}}
\DeclareMathOperator{\diag}{diag}
\DeclareMathOperator{\tr}{tr}
\DeclareMathOperator{\rank}{rank}

\newtheorem{lemma}{Lemma}

\newcommand{\norm}[1]{{\lVert{#1}\rVert}}

\newcommand{\ketbrad}[1]{\left|{#1}\rangle\!\langle{#1}\right|}

\begin{document}
\title{Bound entangled states fit for robust experimental verification}
\author{Gael Sent\'is}
\affiliation{Naturwissenschaftlich-Technische Fakult\"at, Universit\"at Siegen, 57068 Siegen, Germany}
\affiliation{Departamento de F\'isica Te\'orica e Historia de la Ciencia, Universidad del Pa\'is Vasco UPV/EHU, E-48080 Bilbao, Spain}
\email{gael.sentis@uni-siegen.de}
\orcid{https://orcid.org/0000-0002-4982-6570}
\author{Johannes N. Greiner}
\affiliation{3rd Institute  of  Physics,  University  of  Stuttgart  and  Institute  for  Quantum  Science  and  Technology, IQST,  Pfaffenwaldring  57,  D-70569  Stuttgart,  Germany}
\author{Jiangwei Shang}
\affiliation{Beijing Key Laboratory of Nanophotonics and Ultrafine Optoelectronic Systems, School of Physics, Beijing Institute of Technology, Beijing 100081, China}
\affiliation{Naturwissenschaftlich-Technische Fakult\"at, Universit\"at Siegen, 57068 Siegen, Germany}
\email{jiangwei.shang@bit.edu.cn}
\orcid{https://orcid.org/0000-0002-2332-5882}
\author{Jens Siewert}
\affiliation{Departamento de Qu\'imica F\'isica, Universidad del Pa\'is Vasco UPV/EHU, E-48080 Bilbao, Spain}
\affiliation{IKERBASQUE Basque Foundation for Science, E-48013 Bilbao, Spain}
\email{jens.siewert@ehu.eus}
\orcid{https://orcid.org/0000-0002-9410-5043}
\author{Matthias Kleinmann}
\affiliation{Naturwissenschaftlich-Technische Fakult\"at, Universit\"at Siegen, 57068 Siegen, Germany}
\affiliation{Departamento de F\'isica Te\'orica e Historia de la Ciencia, Universidad del Pa\'is Vasco UPV/EHU, E-48080 Bilbao, Spain}
\email{matthias.kleinmann@uni-siegen.de}
\orcid{https://orcid.org/0000-0002-5782-804X}

\begin{abstract}
Preparing and certifying bound entangled states in the laboratory is an intrinsically hard task, due to both the fact that they typically form narrow regions in state space, and that a certificate requires a tomographic reconstruction of the density matrix. Indeed, the previous experiments that have reported the preparation of a bound entangled state relied on such tomographic reconstruction techniques. However, the reliability of these results crucially depends on the extra assumption of an unbiased reconstruction. We propose an alternative method for certifying the bound entangled character of a quantum state that leads to a rigorous claim within a desired statistical significance, while bypassing a full reconstruction of the state. The method is comprised by a search for bound entangled states that are robust for experimental verification, and a hypothesis test tailored for the detection of bound entanglement that is naturally equipped with a measure of statistical significance. We apply our method to families of states of $3\times 3$ and $4\times 4$ systems, and find that the experimental certification of bound entangled states is well within reach.
\end{abstract}

\maketitle

\section{Introduction}

To experimentally prepare, characterize and control entangled quantum states is an essential item in the development of quantum-enhanced technologies, but it also serves the indispensable purpose of testing the predictions of entanglement theory in the laboratory. Among the most intriguing features of this theory stands the existence of \emph{bound} entanglement~\cite{Horodecki1998}, a form of entanglement that cannot be distilled into
singlet states by any protocol that uses only local operations and classical communication. Originally considered as useless for quantum information processing, bound entangled states were later established as a valid resource in the contexts of quantum key distribution~\cite{Horodecki2005}, entanglement activation~\cite{Horodecki1999a,Masanes2006}, metrology~\cite{Czekaj2015,Toth2018}, steering~\cite{Moroder2014a}, and nonlocality~\cite{Vertesi2014}, and their nondistillability has been linked to irreversibility in thermodynamics~\cite{Horodecki2002,Brandao2008}. 

Complementing these theoretical achievements, substantial efforts have been 
devoted to experimentally producing and verifying bound entanglement. The first 
experimental report on the preparation of a bound entangled state was presented 
in~\cite{Amselem2009}, although the result was disputed~\cite{Lavoie2010a} and 
subsequently amended~\cite{Lavoie2010}. The state prepared was the four-qubit 
Smolin state~\cite{Smolin2001}, thus it showcases a multipartite instance of 
bound entanglement, which is fundamentally distinct from the bipartite case: 
when multiple parties are present, entanglement can still potentially be 
distilled if two of the parties work together. Further experimental works on 
multipartite bound entanglement include 
Refs.~\cite{Barreiro2010,Kampermann2010,Dobek2011,Kaneda2012,Amselem2013,Dobek2013}. 
Examples of experimental bipartite setups are more scarce. 
In Ref.~\cite{DiGuglielmo2011} bipartite bound entanglement was produced using 
four-mode continuous-variable Gaussian states, and Ref.~\cite{Hiesmayr2013a} 
focuses on a family of two-qutrit states.

Since the property of nondistillability is experimentally inaccessible in a direct manner, 
a natural route to verify a state as bound entangled is to do
a full tomographic reconstruction of the density matrix from the experimental data~\cite{Paris2004} and apply the only known computable criterion on it~\cite{Horodecki1998}: if an entangled state has positive partial transpose (PPT), then it is nondistillable and therefore bound entangled\footnote{It is still an open question whether the PPT criterion completely characterizes bound entanglement, namely whether all nondistillable states are PPT.}. However, it has recently been pointed out that widely used reconstruction methods like maximum likelihood and least squares~\cite{Hradil1997,James2001} inevitably suffer 
from bias~\cite{Sugiyama2012,Schwemmer2015}, caused by
imposing a positivity constraint over compatible density matrices.
In some cases, the bias can be large enough to drastically change the entanglement properties of the estimator with respect to the true state. 
In addition to this state of affairs, 
the variance of the estimator is
usually calculated by bootstrapping~\cite{Efron1994}, which only accounts for statistical fluctuations, and can result in a smaller variance than the actual bias of the estimator~\cite{Schwemmer2015}. 
In contrast, a direct reconstruction of the state by linear inversion produces an unbiased estimator, but at the cost of admitting unphysical density matrices. 
Then, 
the PPT criterion simply looses all meaning.

All the experimental works cited above support their claims on some combination of maximum likelihood or least squares reconstruction and bootstrapping.
There exist more informative methods to derive errors from tomographic data, 
such as credibility~\cite{Shang2013}
and confidence~\cite{Christandl2012,Blume-Kohout2012} regions, and also the alternative of using linear inversion in addition to a sufficiently large number of measurements that guarantees physical estimates~\cite{Knips2015}. 
Should these methods be applied to the detection of a bound entangled state, more robust results may be generated, although they might come at the expense of being computationally expensive or even intractable~\cite{Suess2016}.
However, regardless of the reconstruction method of choice, the problem of experimentally testing bound entanglement is intrinsically challenging. This is so because bound entangled regions of the state space are typically very small in volume~\cite{Zyczkowski1999}. Furthermore, at least for the known cases in low-dimensional systems, 
bound entangled states are close to both the sets of separable states and distillable entangled states. 
This translates into the requirement of a highly precise experimental setup, and 
the deepening of the potential pitfalls of 
biased tomographic reconstructions.

In this paper, we set ourselves to improve on the above situation by devising methods that enable robust experimental certification of bound entanglement.
Instead of advocating for a particular tomographic method for detecting bound entanglement or considering the preparation of a specific state, we address the  
more generic question: Which are the best candidate states for an experimental verification of bound entanglement? In other words, for 
bipartite systems, we aim at finding states that have the largest ball of bound entangled states around them~\cite{Bandyopadhyay2008}.
To this end, we construct simple lower bounds that the radius $r$ of such a ball (in Hilbert--Schmidt distance) has to obey for a given state, and formulate an optimization problem that maximizes $r$ over parametrized families of states. 
Having a value for the maximum radius, $r^*$, allows us to assert the robustness of the target state at the center of the ball, 
and in turn gives us an idea of the required number of preparations of the state in a potential experiment. We proceed by designing a $\chi^2$ hypothesis test directly applicable over unprocessed tomographic data that provides a certificate for bound entanglement within some statistical significance. 
We show that our proposed method, combining the search of a robust candidate state with a statistical analysis through a hypothesis test, makes rigorous bound entanglement verification not only experimentally feasible with current technology, but also computationally cheap.

The paper is structured as follows. First we derive the constraints for the existence of a ball of bound entangled states of radius $r$ around a generic bipartite state of dimension $d$. Then, we apply our method to two families of states of dimensions $3\times 3$ and $4\times 4$, known to contain bound entanglement, and find robust candidates for its experimental verification. 
We proceed to devise a hypothesis test for bound entanglement, and test the robustness of the selected candidate states in terms of the necessary number of samples to achieve a statistically significant certification under realistic experimental conditions.

\section{A bound entangled ball}

Verifying the bound entangled character of a bipartite state 
$\rho$ requires, on the one hand, showing that it is entangled, and on the other hand 
proving that the entanglement of $\rho$ is nondistillable. Non-distillability is usually verified via the (sufficient) PPT condition,
which we denote by \mbox{$\Gamma(\rho)\geq 0$}.
Throughout this paper, we identify bound entangled states with PPT entangled states.
As for verifying that $\rho$ is entangled, there exist several inequivalent criteria. We choose the violation of the \emph{computable cross norm} or \emph{realignment} criterion (CCNR)~\cite{Guhne2009}, since it is simple and is generally tight enough to detect bound entanglement. The CCNR criterion dictates that, for some local orthonormal basis [with respect to the Hilbert--Schmidt inner product $(A,B)= \tr(A^\dag B)$] of the Hermitian matrices $\{g_k\}$, $\rho$ is entangled if the matrix $R(\rho)_{k,l}=\tr(\rho\, g_k\otimes g_l)$ obeys $\norm{R(\rho)}_1> 1$, where $\norm{Y}_1=\tr\sqrt{Y^\dagger Y}$ is the trace norm of $Y$. The points in the state space that violate CCNR, fulfill PPT, and correspond to physical states (that is, satisfy the positivity condition $\rho\geq 0$), thus define a volume of bound entangled density matrices. 

Given a bound entangled state $\rho$ that obeys these conditions, we inquire how far we can move away from it while remaining in the bound entangled region. We construct a new state $\tau=\rho+rX$, where $r\geq 0$, 
and $X$ is a traceless Hermitian matrix with bounded Hilbert--Schmidt norm $\norm{X}_2=\sqrt{\tr X^\dagger X}\le 1$.
The set of all such matrices forms a Hilbert--Schmidt ball that we denote by $\mathcal{B}'$. We then define the set of all states $\tau$ as $\mathcal B(\rho,r):=\rho+r\mathcal B'$.
Note that $\norm{\rho-\tau}_2 \leq r$. We can bound the minimum eigenvalue of $\tau$ as
\begin{equation}\begin{split}\label{first_lower_bound}
 \minev(\tau)&\ge \minev(\rho)-r\norm{X}_\infty\\&
 \ge \minev(\rho)-r\sqrt{(d-1)/d}\,,
\end{split}\end{equation}
where $\norm{X}_\infty=\lambda_{\rm max}(X)$ is the uniform norm of $X$. The first inequality holds since, in the extreme case, the eigenvector associated to the minimal eigenvalue of $X$ is aligned with the corresponding one for $\rho$.
For the second inequality we have used that 
%
\begin{equation}\label{lemma1}
\max\set{\norm{X}_\infty| X\in \mathcal B'}=  \sqrt{(d-1)/d} 
\end{equation}
%
(we provide a proof of this equation in Appendix~\ref{app:proofs}).

Similarly, since the partial transpose $\Gamma(\tau)$ does not change the 
 Hilbert--Schmidt ball, $\Gamma(\mathcal B) = \mathcal B$, we have
\begin{equation}\label{ppt-bound}
 \minev[\Gamma(\tau)] \ge \minev[\Gamma(\rho)]-r\sqrt{(d-1)/d} \,.
\end{equation}

We can bound the value of the CCNR criterion over $\tau$ in a similar fashion.
We have
\begin{equation}\label{ccnr-bound}
\begin{split}
 \norm{R(\tau)}_1&\ge \norm{R(\rho)}_1-r\norm{R(X)}_1 \\
  &\ge \norm{R(\rho)}_1-r\sqrt{d}\,,
\end{split}\end{equation}
where we first applied the triangle inequality, and for the last inequality we used that
\begin{equation}\label{lemma2}
 \max\set{\norm{R(X)}_1 | X\in \mathcal B'} = \sqrt{d}
\end{equation} 
%
(we refer to Appendix~\ref{app:proofs} for a proof).

Our goal is to find a state $\rho$ such that, if we depart from it by a distance $r$ in any direction, the resulting $\tau$ still fulfils PPT and violates CCNR, that is, $\minev[\Gamma(\tau)]\geq 0$ and $\norm{R(\tau)}_1>1$.
Then, using Eqs.~\eqref{ppt-bound} and \eqref{ccnr-bound}, we search for a state $\rho$ which admits the largest $r$ under the constraints
\begin{subequations}
\begin{align}
 \minev[\Gamma(\rho)]&\ge r\sqrt{(d-1)/d}\,,\label{opt1}\\
 \norm{R(\rho)}_1&> 1+r\sqrt{d}\,.\label{opt2}
\end{align}
\end{subequations}
For any admissible $r$, all states in $\mathcal B$ are bound entangled. 
Note that, while the optimization is naturally performed over physical target states $\rho$, the resulting ball $\mathcal B$ can well be partly outside of the state space, cf. Fig.~\ref{fig:fig1}, as the positivity of all states 
inside $\mathcal{B}$ is not imposed as a constraint.

\begin{figure}[ht]
	\centering
	\includegraphics[width=0.8\columnwidth]{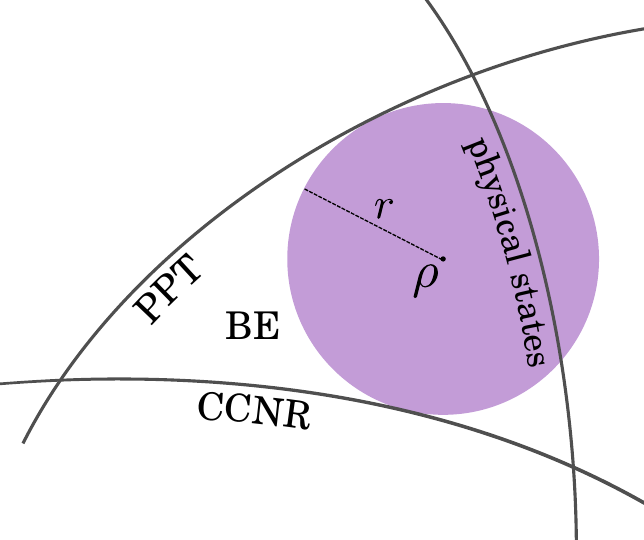}
	\caption{Schematic picture of the state space, where the boundaries of the PPT, CCNR and physical sets enclose a bound entangled region (BE). 
A region of points $\tau=\rho+r X$, where $X$ is a traceless Hermitian matrix with $\norm{X}_2\le 1$, is depicted in violet. The parameter $r$ is the radius of a Hilbert--Schmidt ball around a target state $\rho$ such that all physical states $\tau$ are bound entangled.
	}\label{fig:fig1}
\end{figure}

\section{Symmetric families}\label{sec:families}

The method described in the previous section is completely general, but for the optimization to actually become feasible one has to restrict the free parameters in $\rho$. We consider two symmetric families of bipartite states which are characterized by few parameters, and nevertheless contain fairly large regions of bound entanglement.

The first case that we consider is a family of two-qutrit states of the form
\begin{multline}\label{qutrits}
 \rho= a\ketbrad{\phi_3}+ b\sum_{k=0}^2{\ketbrad{k,k\oplus1}}\\
       +c\sum_{k=0}^2\ketbrad{k,k\oplus2} \,,
\end{multline}
where $\ket{\phi_3}= \sum_{k=0}^2 \ket{kk}/\sqrt3$, the symbol $\oplus$ denotes addition modulo 3, and $a,b,c$ are real parameters. 
A similar 
three-parameter family of two qutrits, known to contain bound entanglement and arise from symmetry conditions, was analyzed in Refs.~\cite{Baumgartner2006,Bertlmann2007, Bertlmann2008,Bertlmann2008a,Bertlmann2009,Sentis2016b}, and considered in the experiment reported in Ref.~\cite{Hiesmayr2013a}.
We search for the optimal values of $\{a,b,c\}$ that parametrize an optimal target state $\rho^*$. This state will admit any $r\leq r^*$, where $r^*$ is the maximum radius compatible with the constraints \eqref{opt1} and \eqref{opt2}. An exact solution for this optimization can be found (see Appendix~\ref{app:qutrits} for details). We obtain the optimal parameters 
\begin{equation}\label{qutrits_opt}
\hspace{-.5em}a\approx 0.21289\,,\; b\approx 0.04834\,,\text{ and}\; c\approx 0.21403 \,,
\end{equation}
yielding a maximal radius $r^*\approx 0.02345$. The resulting $\rho^*$ is a rank-7 state.

Since the ball $\mathcal{B}$ may contain unphysical states, it is in principle possible that our estimate for the maximal $r^*$ is not tight.
To explore this, we move away from $\rho^*$ by a distance $r^*$ in suitable directions and evaluate the position of the resulting state with respect to the boundaries of the PPT and the CCNR sets. We provide the details of this analysis in Appendix~\ref{app:qutrits}. As a result, we obtain that our estimate of $r^*$ is indeed tight with respect to the PPT boundary, but we observe that it slightly underestimates the distance with respect to the CCNR boundary.

As our second case, we consider the two-ququart states that are Bloch-diagonal, i.e., of the form
\begin{equation}\label{blochdiag-ququarts}
\rho = \sum_k x_k g_k \otimes g_k\,, 
\end{equation}
where 
$g_k = (\sigma_\mu\otimes\sigma_\nu)/2$, 
$\mu,\nu=0,1,2,3$, with $\sigma_0=\openone$, $\sigma_1,\sigma_2,\sigma_3$ the Pauli matrices, and 
the index $k$ enumerates pairs of indices $\{\mu,\nu\}$ in lexicographic order. Bound entangled Bloch-diagonal states have already been described in Ref.~\cite{Moroder2012}. The optimization over this family of states, despite having more free parameters, is much simpler than in the two-qutrit case discussed above. The reason is that the problem can be reshaped as a linear program over the coefficients $x_k$, 
and thus it can also be solved exactly (see Appendix~\ref{app:ququarts} for details). A vertex enumeration of the corresponding feasibility polytope is possible, and leads us to a set of 4224 optimal states achieving a maximal radius $r^*\approx 0.0214$. One example of such optimal state, $\rho^*$, is given by coefficients 
\begin{equation}\label{ququarttarget}
\begin{split}
x_1 &= \frac{1}{4} \,,\\
x_\alpha &\approx -0.0557066 \,,\quad \alpha\in \{2, 3, 4, 5, 6, 9, 12, 14, 16\} \,,\\
x_\beta &\approx 0.0142664 \,,\quad \beta\in \{7, 11, 13\}\,,\\
x_\gamma &\approx 0.0971467 \,,\quad \gamma\in \{8,10,15\} \,.
\end{split}
\end{equation}
where $x_1$ is fixed by normalization.
The state $\rho^*$ has rank 10, which is the minimal rank among bound entangled states achieving $r^*$ that are of the form~\eqref{blochdiag-ququarts} and are detectable by CCNR. As a byproduct of our analysis, we also obtain that the overall minimal rank for such bound entangled states is 9, albeit with a fairly smaller radius.

\section{Statistical analysis}
In this section we put our method for finding optimal target states to work in a practical scenario. That is, for an experiment aiming at the certification of bound entanglement, we design a statistical analysis of the experimental data that crucially hinges on knowing the radius of the bound entangled ball around the target state. The idea is to judge whether the data was obtained from a bound entangled state by considering the membership of the preparation to the bound entangled ball. In order to endow the certification with a measure of statistical significance, we design a hypothesis test for this membership problem.

Our null hypothesis, $H_0$, is that 
the prepared state is outside the bound entangled ball $\mathcal{B}(\rho_0,r_0)$ of radius $r_0$ centered at the target state $\rho_0$.
We make the assumption that an instance of experimental data, $\vec{x}$, is drawn from a multivariate normal distribution $N(\vec\xi,\Sigma)$ with offset $\vec\xi$ and covariance matrix $\Sigma$. This is a good approximation for realistic scenarios.
Here the vector notation is used over variables that belong to the same space as the experimental data, that is, e.g., the space of frequencies of measurement outcomes.
When $H_0$ holds true, then the offset $\vec\xi$ is the expected value of the 
data when a state $\rho_{\rm exp}$ is prepared such that 
$\norm{\rho_0-\rho_{\rm exp}}_2\geq r_0$.
The covariance matrix $\Sigma$ is determined by the particular experimental 
procedure used.

The goal is to design an appropriate hypothesis test for $H_1$ of the form 
$\hat{t}(\vec x)\le t$, where $t$ is a threshold parameter, and the function 
$\hat{t}$ is called a test statistic.
If the hypothesis test is true, the data $\vec x$ is accepted as fulfilling the 
hypothesis $H_1$, that is, the state is bound entangled.
The quantity through which we assess the significance of the hypothesis test is 
the worst-case probability of the test accepting $H_1$ while $H_0$ is true.
This is formally written as
\begin{equation}\label{p-value-def}
p(t,r_0) = \sup_{\rho}\left\{ \mathbb{P}[\hat{t}\le t] \mid \norm{\rho_0-\rho}_2 
\geq r_0\right\},
\end{equation}
where $\mathbb{P}[\hat{t}\le t]$ is the probability for the hypothesis test to 
accept data sampled from $N(\vec \xi,\Sigma)$,
$\vec\xi$ depends to $\rho$, and $\norm{\rho_0-\rho}_2\ge r_0$ is the assertion 
that $H_0$ holds true.
For given data $\vec x$, the probability $p[\hat{t}(\vec x), r_0]$ is the
 $p$-value of this hypothesis test.
 
Now, let us define %
\begin{equation}\label{test-statistic}
\hat{t}\colon \vec x\mapsto\norm{\Sigma^{-1/2}[T(\rho_0)-\vec x]}_2 \,,
\end{equation}
where $T$ is a map that takes a density matrix to the expected data (e.g., to a vector of probabilities),
thus it is determined by the experimental procedure. Hence, $\hat{t}(\vec x)$ 
gives some notion of distance between the standardized versions of the 
experimental data and the expected data of a perfect measurement performed over 
the target state. In Appendix~\ref{app:hyptest} we show that, if $\hat{t}$ is of 
the form in Eq.~\eqref{test-statistic}, the probability~\eqref{p-value-def} is 
naturally upper-bounded by
\begin{equation}\label{CDF}
p(t,r_0) \leq q_m(t^2,r_1^2) \,,
\end{equation}
where $s\mapsto q_m(s,u)$ is the cumulative distribution function of the noncentral $\chi^2$-distribution with $m$ degrees of freedom and noncentrality parameter $u$, 
and $r_1$ is the equivalent distance in the experimental data space of the Hilbert--Schmidt distance $r_0$, or, more precisely,
$r_1 := \inf_{\Delta\not\in\mathcal B} \hat{t}[T(\Delta)]$, where $\Delta$ is any Hermitian matrix with unit trace.
Naturally, the value of $r_1$ scales with $r_0$ and strongly depends on the experimental procedure, that is, on $T$ and $\Sigma$.

In the following, we show how to evaluate the hypothesis test for the two-qutrit state from Eq.~\eqref{qutrits_opt} as an example of a target state $\rho_0$, and assess the necessary experimental requirements for a desired level of significance. For this, we need to make some assumptions about the experimental procedure. 
We associate the measurement outcomes in the experiment to semidefinite-positive Hermitian operators $E_k$. 
We consider a complete set of such operators, 
that is, the real linear span of $\{E_k\}$ is the set of all Hermitian operators. 
The probability of obtaining the outcome corresponding to $E_k$ when measuring the state $\rho$ is given by $p_k=\tr(E_k\rho)$, and we assume that $T(\rho)_k\equiv p_k$. 
The connection between the probabilities $p_k$ and the data gathered in the experiment crucially depends on how the experiment is performed. A straightforward theoretical association can be established if one assumes that measurement outcomes correspond to independent Poissonian trials with parameters $nT(\rho)_k$, renormalized by $n$, where $n$ is the mean total number of events per measurement setting. 
That is, if we obtain $n_k$ events for the outcome $k$, we use as data $x_k= n_k/n$. Note, however, that in a realistic experiment $T(\rho)$ and $n$ are not known exactly.
A reasonable alternative is to use the total number of observed events $\sum_k n_k$ as an estimate of $n$, but then 
the connection between $p_k$ and $x_k$ is more involved. For our examples below we use this latter approach (a detailed discussion is presented in Appendix~\ref{app:hyptest}).

In order to get specific predictions, we assume that mutually unbiased
bases 
are used as local measurements (refer to Appendix~\ref{app:hyptest} for an explicit construction) and that the experimental state $\rho_{\rm exp}$ has 5\% white noise over the target, that is, $\rho_{\rm exp}= 0.95 \rho_0+0.05
\openone/9$. 
This amount of noise still results in a state within $\mathcal{B}$, since $\norm{\rho_0-\rho_{\rm exp}}_2\approx 0.6r_0$,  
where $r_0\approx 0.02345$ is the optimal radius associated to $\rho_0$ (cf. Section~\ref{sec:families}).
Then, one obtains $r_1^2\approx 0.0664^2n$ (see Appendix~\ref{app:hyptest} for details on how to compute this value).  
We now set a critical $t_0$ below which the $p$-value will be larger than some acceptable threshold $p_0$, so that $q_m(t_0^2,r_1^2)=p_0$. 
To obtain $t_0$, one inverts the equation $q_m(t^2,r_1^2)=p$ to get $t^2(p)$, called the quantile function. Then, $t_0 := t(p_0)$.
Once $t_0$ is fixed,
we compute the probability $p_{\rm fail}$ that the test fails to certify bound entanglement over data obtained by measuring the prepared state $\rho_{\rm exp}$, i.e., that $\hat{t}(\vec\xi)>t_0$ given $\vec\xi=T(\rho_{\rm exp})$. We can write this probability as 
\begin{equation}
p_\mathrm{fail}= 1-q_m\!\left(t_0^2, r_2^2\right) \,,
\end{equation}
where 
$r_2^2:={\hat{t}(\vec\xi)}^2 \approx 0.0416^2n$. 
Note that, while $r_1$ is used to determine $t_0$ considering a worst case scenario for a false positive, $r_2$ is a distance between standardized probabilities given the particular preparation $\rho_{\rm exp}$. 
Therefore, $r_1>r_2$ means that 
the test $\hat{t}$ has a chance to single out the particular experimental preparation as bound entangled from the worst-case state outside $\mathcal B$,
and hence $p_{\rm fail}$ will decrease with the number of samples $n$, which is the case of interest.
In Fig.~\ref{fig:fig2} we plot $p_{\rm fail}$ as a function of $n$, for various levels of significance $p_0$ expressed as multiples of the standard deviation $k\sigma$, so that $p_0=1-{\rm erf}(k/\sqrt{2})$. 
\begin{figure*}[ht]
	\centering
	\subfloat[]{{\includegraphics[width=.45\textwidth]{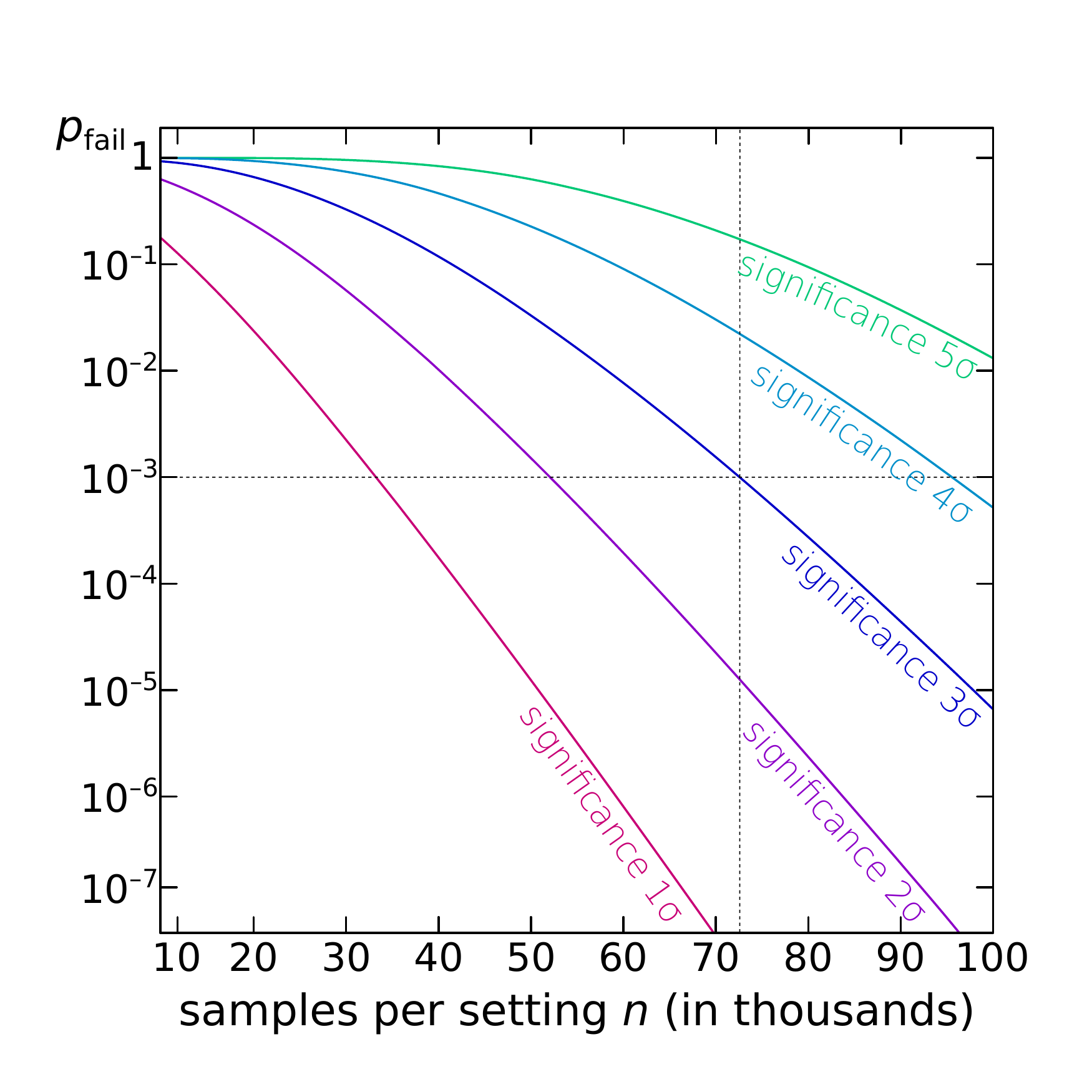}}}
	\hfill
	\subfloat[]{{\includegraphics[width=.45\textwidth]{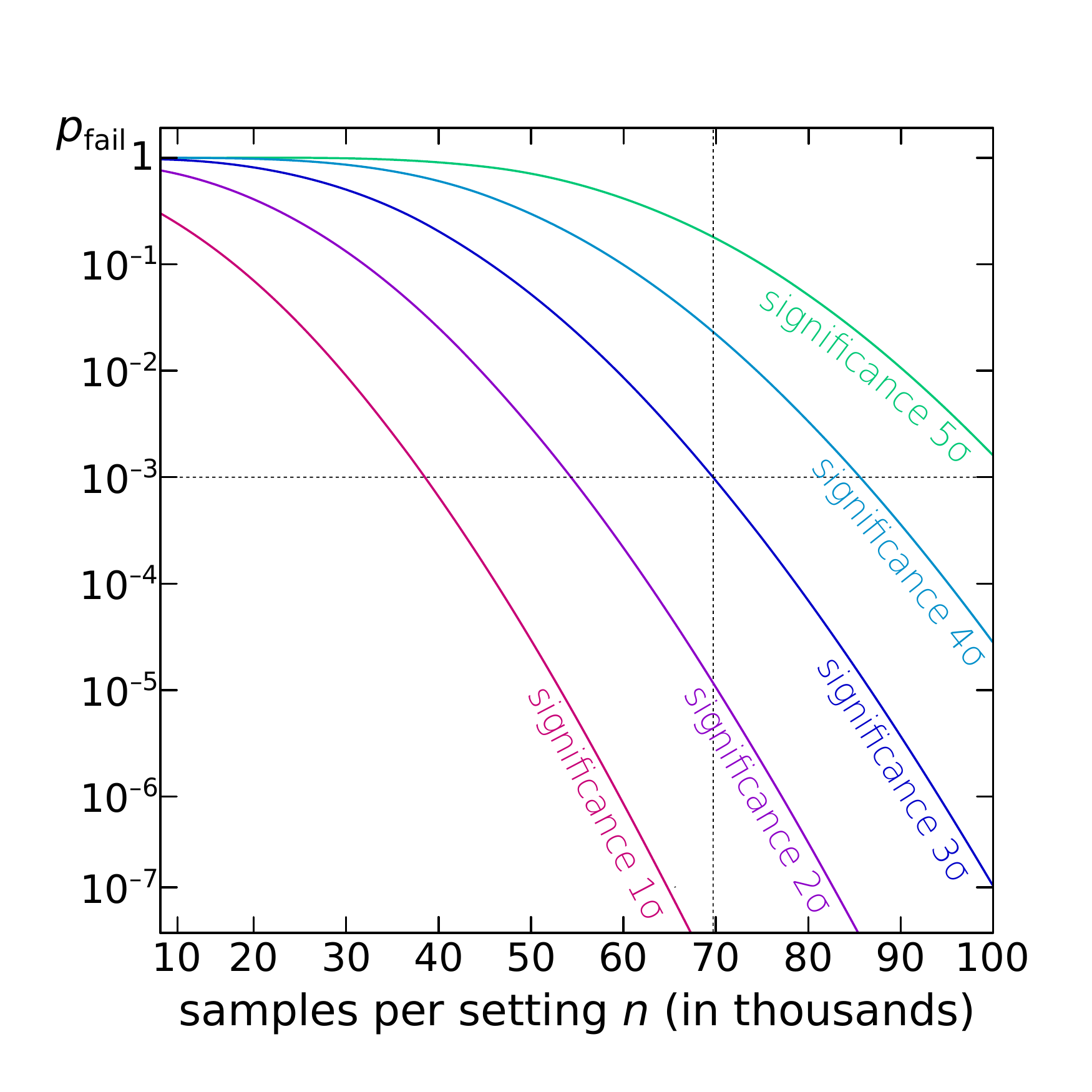}}}
\caption{\label{fig:fig2}
Probability $p_\mathrm{fail}$ to obtain data which does not confirm
bound 
entanglement at a level of significance corresponding to at least
$k\sigma$ 
standard deviations. The experiments consist of locally measuring
(a) mutually unbiased bases over preparations of the two-qutrit state in Eq.~\eqref{qutrits_opt} with 5\% white noise, and
(b) local Pauli measurements over the two-ququart state in Eq.~\eqref{ququarttarget} with 2.5\% white noise.}
\end{figure*}

Similarly, we carry out the same analysis for the two-ququart target state specified in Eq.~\eqref{ququarttarget}. In this case we construct our measurement settings 
by regarding each ququart as two qubits and performing local Pauli measurements on them. With $r_1^2\approx 0.0856^2 n$, $r_2^2\approx 0.0469^2 n$, and an admixture of 2.5\% white noise over the target $\rho_0$, we obtain the results shown in Fig.~\ref{fig:fig2}. 
In both cases, for the selected target states of two qutrits and two ququarts given by Eqs.~\eqref{qutrits_opt} and \eqref{ququarttarget}, our analysis shows that their experimental certification as bound entangled states under realistic assumptions is within reach, with around 70000 samples per measurement setting to reach a 3$\sigma$ level of significance.
Note that this number of samples justifies a posteriori our analysis assuming that experimental data $\vec x$ is normal-distributed.

\section{Discussion}

We have developed a comprehensive method for the experimental characterization of bipartite bound entanglement, from the selection of robust target states to the statistical analysis of the data. Previous experimental works were based on preparing a bound entangled state and inferring its properties from those of the reconstructed density matrix via maximum likelihood or least squares. Unfortunately, such techniques have been shown to produce unreliable results, particularly for entanglement certification~\cite{Sugiyama2012,Schwemmer2015}, which casts a shadow of doubt over past experimental demonstrations of bound entanglement. Instead of using a (necessarily) biased reconstruction of the density matrix and assuming that it shares the same properties that the true prepared state, we show that it is feasible to perform a hypothesis test over the unprocessed experimental tomographic data to test for membership of the preparation to a subset of the state space that is guaranteed to only contain bound entangled states. 
The hypothesis test is naturally equipped with a measure of statistical significance, which is easy to compute. 
The subset of bound entangled states is specified as a ball of radius $r$, 
and the design of the hypothesis test directly depends on this parameter. We 
have shown, through explicit examples of families of bipartite qutrit states 
and bipartite ququart states, how the certification of bound entanglement with 
high statistical significance is well within current experimental capabilities 
by using our method.

While our statistical analysis of the data in the form of a hypothesis test is a standalone technique applicable to the preparation and measurement of any state, we would like to stress that the value of the radius $r$ has an important effect in its detection power, hence it is worth aiming at target states with maximal $r$. To show this, we run our analysis for the two-qutrit state prepared in Ref.~\cite{Hiesmayr2013a}, assuming a noiseless experimental preparation. This state has a ball of bound entangled states around it of radius $r\approx 0.01182$. This value is computed as the maximal $r$ that satisfies Eqs.~\eqref{opt1} and \eqref{opt2}. We obtain that the required number of samples for achieving a $3\sigma$ level of significance with a failure probability $p_{\rm fail}\approx 10^{-3}$ is roughly twice as much as for the two-qutrit state in Eq.~\eqref{qutrits_opt} (which admits a radius $r\approx 0.02345$) when assuming a noisy preparation. As a comparison, optimizing over the one-parameter Horodecki family~\cite{Horodecki1998}---which is part of the family in Eq.~\eqref{qutrits}---yields a radius $r\approx 0.01681$, for the parameters $a\approx 0.28571$, $b\approx 0.07931$, and $c\approx 0.15879$. 

As a concluding remark, some comments on the generality of our result are in order. If one has prior knowledge of the expected amount and type of experimental noise, this could be incorporated into the statistical analysis by assuming the noisy state as the target. As a result, one should expect a smaller value of the test statistic $\hat{t}(\vec x)$ for a given set of data $\vec x$ and hence a smaller $p$-value. However, our calculations for several examples indicate that this advantage does not compensate in general the drawback of having a smaller bound entangled ball. A refinement of our method could also be achieved by taking into account the form of the covariance matrix $\Sigma$ into the optimization over target states, as we did incorporate it into the design of the hypothesis test. This would generally yield an ellipsoid around the target, instead of a ball, potentially capturing a larger volume of bound entangled states and thus leading to a stronger test. The possible disadvantage is that the bounds \eqref{opt1} and \eqref{opt2} will likely be much more complicated (if computable at all), hence the optimization step will be much harder to carry out. We leave the question open of whether such refinement provides a significant reduction in the experimental requirements. 

\vspace*{1cm}

\acknowledgments 
We acknowledge Otfried G\"uhne, G\'eza T\'oth, Beatrix Hiesmayr, Wolfgang L\"offler, and Dagmar Bru{\ss} for useful discussions.
This project was funded by the ERC Starting Grant No. 258647/GEDENTQOPT, the Basque Government grant No. IT986-16, the Spanish MINECO/FEDER/UE grant FIS2015-67161-P, the UPV/EHU program UFI 11/55, the ERC Consolidator Grant 683107/TempoQ, the DFG (No. KL 2726/2-1), and the Beijing Institute of Technology Research Fund Program for Young Scholars.

\onecolumn
\newpage
\appendix

\section{Proofs of Eqs.~\eqref{lemma1} and \eqref{lemma2}}\label{app:proofs}

Given a target state $\rho$, we construct the displaced state $\tau=\rho+r X$, where $r\geq 0$, and $X$ is a bounded traceless Hermitian matrix fulfilling $\norm{X}_2\leq 1$. We claim that the minimal eigenvalue of $\tau$ can be lower bounded as [cf. Eq.~\eqref{first_lower_bound}]
\begin{equation}
\minev(\tau)\ge \minev(\rho)-r\norm{X}_\infty \ge \minev(\rho)-r\sqrt{(d-1)/d}\,,
\end{equation}
where for the second inequality we have used that [cf. Eq.~\eqref{lemma1}]
\begin{equation}
\max\set{\norm{X}_\infty| X\in \mathcal B'}=  \sqrt{(d-1)/d} \,,
\end{equation}
and $\mathcal B'=\{X|\norm{X}_2\le 1,\tr X=0,X=X^\dagger\}$.
This holds from the following reasoning.
Clearly, the maximum is attained for $X=\diag(x,y_1,\dotsc,y_{d-1})$ with some vector $\vec{y}$ fulfilling 
 $\norm{\vec y}_\infty\le x= -\vec y\cdot \vec e$ and $x^2+\vec y^2\le 1$.
Here, $e_k=1$ for $k=1,\dotsc,d-1$. Note that the choice of $\vec y$ that allows $x$ to be largest is the uniform vector $\vec y = -x\vec e/\vec e^2 \equiv\vec z$, since it minimizes ${\vec y}^2$.
The maximization now reduces to 
$$\max\set {x| x^2+x^2/\vec e^2\le 1}\,,$$ 
which immediately yields the assertion due to $\vec e^2=d-1$.
 
In a similar fashion, we then obtain a lower bound of the 1-norm of the realigned state $\tau$ as [cf. Eq.~\eqref{ccnr-bound}]
\begin{equation}
\norm{R(\tau)}_1 \ge \norm{R(\rho)}_1-r\norm{R(X)}_1 \ge \norm{R(\rho)}_1-r\sqrt{d}\,,
\end{equation}
where we use that [cf. Eq.~\eqref{lemma2}]
\begin{equation}
\max\set{\norm{R(X)}_1 | X\in \mathcal B'} = \sqrt{d}\,.
\end{equation}
This immediately follows from the facts that
the $1$-norm is bounded by the $2$-norm as $\norm{A}_1\le 
\sqrt{\rank(A)}\norm{A}_2$, 
via e.g. Cauchy-Schwarz inequality,
and that $\norm{R(X)}_2=\norm{X}_2\leq 1$.
Hence $\sqrt d$ is an upper bound.
It remains to show that this bound is attainable.
For arbitrary $d$, we see that $\tr X = 0 \Leftrightarrow [R(X)]_{1,1} = 0$ (choosing the local basis such that $g_1 = \openone$). Then, $R(X)$ is 
a constant anti-diagonal matrix.

\section{Optimally detectable bound entangled states}

\subsection{A state of two-qutrits}\label{app:qutrits}
The first case that we consider is a family of two-qutrit states of the form
\begin{eqnarray}
\rho=a\ketbrad{\phi_3}&+&b\sum_{k=0}^2\ketbrad{k,k\oplus1}\nonumber\\
&+&c\sum_{k=0}^2\ketbrad{k,k\oplus2}\,,
\end{eqnarray}
where $\ket{\phi_3}=\sum_{k=0}^2\ket{kk}/\sqrt{3}$, the symbol $\oplus$ denotes addition modulo 3, and $a,b,c$ are real and nonnegative. We denote the dimension of the total Hilbert space by $d=9$. There are three distinct eigenvalues for $\rho$,
\begin{equation}
\lambda(\rho) = \{a,b,c\}\,,
\end{equation}
which satisfy the constraint
\begin{equation}\label{app:q3norm}
a+3(b+c)=1\,.
\end{equation}
Then, the three distinct eigenvalues of its partial transpose $\Gamma(\rho)$ are given by
\begin{equation}\label{app:q3_eigT}
\lambda(\Gamma(\rho))=\left\{\frac{a}{3},\frac1{6}\Bigl(1-a\pm\sqrt{4a^2+9(b-c)^2}\Bigr)\right\}\,.
\end{equation}
Now, the trace-norm of the realigned matrix $R(\rho)$ is given by
\begin{equation}\label{app:q3_CCNR}
\norm{R(\rho)}_1=\frac1{3}+2a+2\sqrt{3(b^2+c^2+bc)-(b+c)+\frac1{9}}\,.
\end{equation}
Plugging in Eqs.~\eqref{app:q3_eigT} and \eqref{app:q3_CCNR} in the bounds \eqref{opt1} and \eqref{opt2} one obtains two inequalities that, together with Eqs.~\eqref{app:q3norm} and $\lambda(\rho)\ge 0$, form the set of constraints that a valid target state should obey. The maximization over $r$ such that these constraints hold can be solved exactly, although the analytical form of the optimal parameters is rather involved. Here we give the approximate values $a\approx0.21289$, $b\approx0.04834$, and $c\approx0.21403$ that yield an optimal state $\rho^{\ast}$, which admits any $r\le r^* \approx 0.02345$. Note that the optimization is invariant under the interchange $b\leftrightarrow c$, therefore $r^*$ is also not affected by it. A visual representation of the density matrix of $\rho^*$ is shown in Fig.~\ref{fig:qutritsBarPlot}.

\begin{figure}[t]
\centering
	\includegraphics[scale=.75]{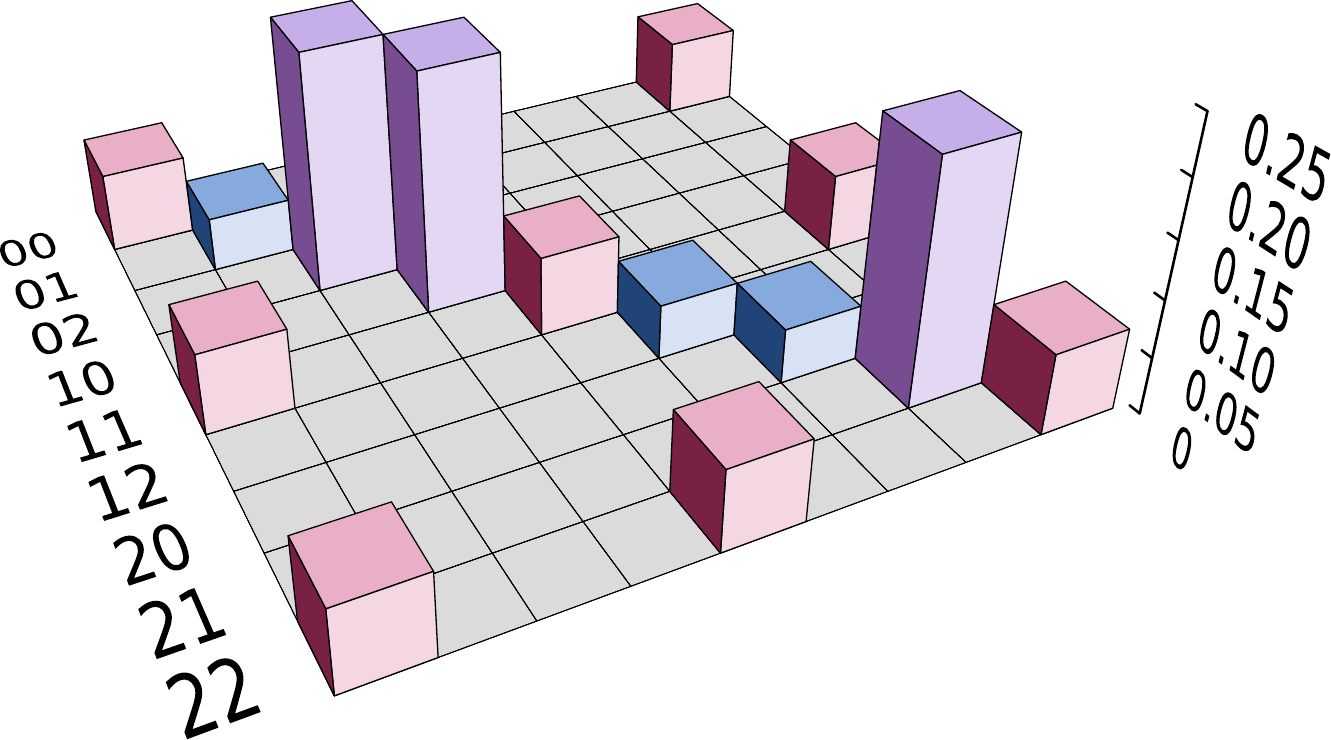}
\caption{The density matrix of our optimal two-qutrit state $\rho^*$ in the computational basis. 
The heights of the bars correspond to the entries of the density matrix. Note that all entries of $\rho^*$ are real-valued.
}\label{fig:qutritsBarPlot}
\end{figure}

As argued in the main text, since we do not require in our optimization that the full ball $\mathcal{B}$ is contained in the state space, an analysis of the tightness of our estimate for $r^*$ is in order. To do so,
we move away from $\rho^*$ by a distance $r^*$ in suitable directions and evaluate the position of the resulting state with respect to the boundaries of the PPT and the CCNR sets. We first move towards the PPT boundary. To this end, we take a normalized eigenvector $\ket\eta$ from the eigenspace of $\Gamma(\rho^*)$ with minimal eigenvalue and find, for the matrix after the displacement $\tilde\rho= 
 \rho^*-r^*\Pi[\Gamma^{-1}(\ketbrad\eta)]$, the values
\begin{equation}
 \minev[\Gamma(\tilde\rho)]\approx 0\,,\quad
 \norm{R(\tilde\rho)}_1\approx 1.094\,,\quad \text{ and}
 \quad \minev(\tilde\rho)\approx -0.005\,,
\end{equation}
where we wrote $\Pi[X]$ for $[X-\tr(X)\openone/9]/\xi$ for some appropriate $\xi$, so that $\norm{\Pi[X]}_2=1$.
Hence, $\tilde\rho$ is effectively sitting on the PPT boundary, it is still detected as bound entangled by the CCNR criterion, and it is slightly outside the space of physical states.
This shows that our estimate for $r^*$ is tight with respect to PPT.

Similarly, we now consider a displacement towards the CCNR boundary.
With a singular value decomposition $UDV^\dag=R(\rho^*)$, we let 
 $\bar\rho=\rho^*-r^*\Pi[R^{-1}(UV^\dag)]$ and find the values
\begin{equation}
 \minev[\Gamma(\bar\rho)]\approx 0.03\,,\quad
 \norm{R(\bar\rho)}_1\approx 1.004\,,\quad\text{ and}
 \quad \minev(\bar\rho)\approx 0.01\,.
\end{equation}
Therefore, it is possible that our state $\rho^*$ is not yet optimal for obtaining the largest value $r^*$, since we are still not touching the CCNR boundary.

\subsection{States of two-ququarts}\label{app:ququarts}
We consider two-ququart states that are Bloch-diagonal, i.e., of the form
\begin{equation}\label{app:blochdiag-ququarts}
\rho=\sum_k x_k g_k\otimes g_k\,,
\end{equation}
where $g_k = (\sigma_\mu\otimes\sigma_\nu)/2$, $\mu,\nu=0,1,2,3$, with $\sigma_0=\openone$, $\sigma_1,\sigma_2,\sigma_3$ the Pauli matrices, and 
the index $k$ enumerates pairs of indices $\{\mu,\nu\}$ in lexicographic order. Since $\tr(\rho)=1$, we get $x_1=1/4$. 

By making the assumption that $|x_k| = s$ for $k=2,\ldots,16$, the optimization can be easily carried out analytically. We will see later that the maximum radius $r^*$ for the general states \eqref{app:blochdiag-ququarts} is indeed achieved under this assumption. 
The trace norm of the realignment is
\begin{equation}
\norm{R(\rho)}_1=\sum_k |x_k|=\frac1{4}+15s\,.
\end{equation}
Computing the minimal eigenvalue of $\Gamma(\rho)$ is not so straightforward, as the signs of the coefficients in Eq.~\eqref{app:blochdiag-ququarts} ought to be taken into account. We consider all possible sign combinations, and we see that the minimal eigenvalue of $\Gamma(\rho)$ is always of the form $\lambda_{\scriptsize\mbox{min}}[\Gamma(\rho)]=\frac1{16}[1-4(2k-1) s]$, where $k=1,2,\ldots,8$. Further, we check that $k=2$ is the only combination with which the constraints $\rho\geq 0$, $\Gamma(\rho)\geq 0$, and $\norm{R(\rho)}_1>1$ are satisfied, so we have
\begin{equation}
\lambda_{\scriptsize\mbox{min}}[\Gamma(\rho)]=\frac1{16}-\frac3{4}s\,.
\end{equation}
Let us denote 
\begin{subequations}\label{eq:r_ququart}
\begin{align}
r_a & := \sqrt{d/(d-1)}\,\lambda_{\scriptsize\mbox{min}}[\Gamma(\rho)]
= \frac{4}{\sqrt{15}}\left(\frac1{16}-\frac3{4}s\right)\,,\nonumber\\
r_b & := (\norm{R(\rho)}_1-1)/\sqrt{d} = \frac{1}{4}\left(15 s-\frac{3}{4}\right)\,.\nonumber
\end{align}
\end{subequations}
Since $r_a$ and $r_b$ are affine functions of the parameter $s$, the solution to the equation $r_a=r_b$ will single out a unique $s^*$, and the radius $r^*$ associated to it will be maximal. These are
\begin{align}
s^* &= \frac1{12}\frac{4+3\sqrt{15}}{4+5\sqrt{15}}\approx0.05571\,,\nonumber\\
r^* &= \frac{1}{4}\left(15 s^*-\frac{3}{4}\right)\approx0.0214\,.
\end{align}
As argued before, the parameter $s^*$ is not enough to characterize a state of the form~\eqref{app:blochdiag-ququarts}, as sign combinations of the coefficients $x_k$ matter. An example of a state that achieves $r^*$ is given by
the parameters $x_1=1/4$, $x_{k}=s^*$ for $k\in S=\{2,3,4,5,6,7,9,10,12,14\}$, and $x_{k}=-s^*$ for $k\in S^c\setminus\{1\}$.

Let us now tackle the optimization over the family of states in Eq.~\eqref{app:blochdiag-ququarts} without any extra assumption, for which we will resort to an exact but computer-aided method. First, note that the search for the optimal state can be concisely expressed as 
\begin{equation}\label{linprog-1}
\begin{split}
	{\rm maximize}\quad& r \\
	{\rm subject\;to}\quad& Mx +\frac{1}{4} \geq 0 \\
	& MDx + \frac{1}{4} \geq \sqrt{15}r \\
	& \sum_k |x_k|-\frac{3}{4} \geq 4r
\end{split}
\end{equation}
where $x=\{x_k\}_{k=2}^{16}$ is a 15-dimensional real vector, $r$ is a real nonnegative number, $D$ is a diagonal matrix of signs such that the map $x\mapsto Dx$ effectively induces the map $\rho\mapsto \Gamma(\rho)$, $M$ is a $16\times 15$ matrix of signs such that $\tfrac{1}{4}Mx+\tfrac{1}{16}$ is a vector of eigenvalues of $\rho$,
and recall that $x_1 =1/4$ is fixed by normalization. 
Under these considerations it becomes apparent that the constraints in Eq.~\eqref{linprog-1} correspond to semidefinite-positiveness, PPT, and CCNR, respectively.

In order to solve Eq.~\eqref{linprog-1}, we linearize the CCNR constraint by removing the absolute values and considering all the $2^{15}$ possible sign combinations in the sum. 
Moreover, we incorporate the constraint $r\geq 0$ and regard $r$ as an extra free parameter, instead of as an objective function.
Then,
each sign combination chosen in the CCNR constraint results in a linear constraint that, along with positivity, PPT, and $r\ge 0$, will define a polytope of feasible parameters
and, since all constraints are linear in $r$, its maximal value will occur for one of the vertices.
Of course, each single linear program of this kind is in principle more restrictive than Eq.~\eqref{linprog-1}, but it is granted that the solution we are looking for will correspond to a vertex of the feasibility polytope of one of them.
We enumerate the vertices of all the $2^{15}$ polytopes using the polyhedral computation library \texttt{cddlib}\footnote{\texttt{cddlib} version 0.94h, \url{https://www.inf.ethz.ch/personal/fukudak/cdd_home/}}.
We then take the union of all vertices, eliminate redundant and nonextremal ones, and end up with a set of 4224 vertices with a maximal radius $r^*\approx 0.0214$, which coincides with the value analytically obtained above. The advantage of this method is that we obtain a much larger set of optimal states with varying rank. Having optimal states with smaller rank can significantly simplify their experimental preparation. In contrast, note that any state for which the assumption made in the analytical calculation above, $|x_k|=s$ for all $k=2,\ldots,16$, holds, is full-rank. The minimal rank for states of the form~\eqref{app:blochdiag-ququarts} achieving a maximal radius $r^*$ is 10.
An example of such a rank-10 optimal state is given by parameters
\begin{equation}\label{eq:opt_q4}
\begin{split}
x_1 &= \frac{1}{4} \,,\\
x_\alpha &\approx -0.0557066 \,,\quad \alpha\in \{2, 3, 4, 5, 6, 9, 12, 14, 16\} \,,\\
x_\beta &\approx 0.0142664 \,,\quad \beta\in \{7, 11, 13\}\,,\\
x_\gamma &\approx 0.0971467 \,,\quad \gamma\in \{8,10,15\} \,.
\end{split}
\end{equation}
\begin{figure}[t]
\centering
	\includegraphics[scale=.75]{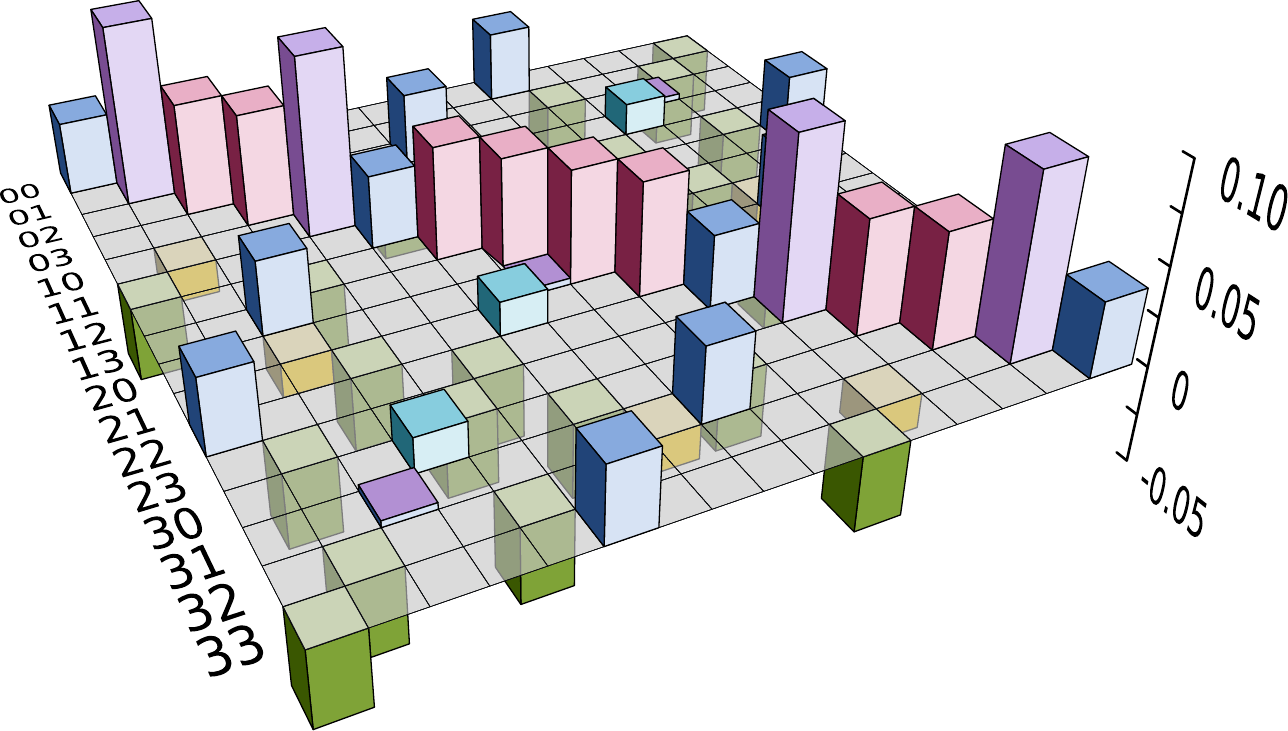}
\caption{The density matrix of the optimal two-ququart state with parameters given in Eq.~\eqref{eq:opt_q4} in the computational basis. The heights of the bars correspond to the entries of the density matrix. Note that these are all real-valued.
}\label{fig:ququartsBarPlot}
\end{figure}
In Fig.~\ref{fig:ququartsBarPlot} we depict the resulting density matrix as a bar plot.

The vertex enumeration of all feasibility polytopes contained in Eq.~\eqref{linprog-1} allows us to go beyond the set of optimal states and characterize states with even smaller ranks, naturally at the expense of sacrificing some volume of bound entangled states around them. 
With this in mind, we see that the absolute minimal rank for a Bloch-diagonal two-ququart bound entangled state is 9, with associated radius $r_9 \approx 0.0128$. An example of such state is given by parameters
\begin{equation}
\begin{split}
x_1 &= \frac{1}{4} \,,\\
x_\alpha &\approx 0.0999184 \,,\quad \alpha\in\{2, 5\}\,,\\
x_\beta &\approx 0.0750408 \,,\quad \beta\in\{3, 4, 6, 10, 14\}\,,\\
x_\gamma &\approx 0.0501632 \,,\quad \gamma\in\{7, 9\}\,,\\
x_\delta &\approx -0.0252856 \,,\quad \delta\in\{8, 13\}\,,\\
x_\epsilon &= -x_\gamma \,,\quad \epsilon\in\{11, 15, 16\}\,,\\
x_{12} &= -x_\delta \,.
\end{split}
\end{equation}

%
%
%

\section{A hypothesis test for bound entanglement}\label{app:hyptest}

In this appendix we give a proof of Eq.~\eqref{CDF} and show how to compute the parameters $r_1$, $r_2$, and $m$, needed to reproduce Fig.~\ref{fig:fig2}. Let us start with the proof.

The $\chi^2$-distribution arises naturally in a hypothesis test where the hypotheses are defined in terms of bounds on Euclidean distances in a vector space of normal-distributed random variables. 
To see this, we consider a generic situation where we draw a sample $\vec x$ from an $m$-variate normal distribution $N(\vec\xi, \Sigma)$ with offset $\vec\xi$ and covariance matrix $\Sigma$. The offset shall be of the form $\vec\xi=S\lambda+\vec\eta$ with a matrix $S$ and a constant vector $\vec\eta$. 
We denote by $p(t,r)$ the maximal probability that the hypothesis test $\hat{t}(\vec x)\le t$ holds true under the hypothesis $\norm{\bf \mu- \lambda}_2\ge r$ for some fixed $\mu$, where $\vec x$ is sampled from $N(\vec\xi,\Sigma)$, and $\hat t$ is a test statistic. Then, for a given sample $\vec x$, the probability $p[\hat{t}(\vec x), r]$ is a $p$-value for the hypothesis test $\hat{t}(\vec x)\le t$. By choosing
\begin{equation}\label{that}
 \hat{t}\colon \vec x \mapsto \norm{\Sigma^{-1/2}(S\mu+ \vec\eta- \vec x)}_2\,,
\end{equation}
the following lemma holds:

\begin{lemma}\label{getpvalue}
We have
\begin{equation}\label{pvalue}
 p(t,r)= q_m[t^2, r^2\, \lambda_\mathrm{min}(S^T\Sigma^{-1}S)]\,,
\end{equation}
where $s\mapsto q_m(s,u)$ is the cumulative distribution function of
the 
noncentral $\chi^2$-distribution with $m$ degrees of freedom and
noncentrality 
parameter $u$. Here, $\lambda_\mathrm{min}(X)$ denotes the smallest
eigenvalue 
of the symmetric matrix $X$.
\end{lemma}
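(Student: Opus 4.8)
The plan is to reduce the supremum over the hypothesis set $\{\lambda \mid \norm{\mu-\lambda}_2\ge r\}$ to a single worst-case offset, and then to recognize the resulting probability as the CDF of a noncentral $\chi^2$ variable. First I would whiten the data: set $\vec y = \Sigma^{-1/2}\vec x$, so that $\vec y$ is drawn from $N(\Sigma^{-1/2}\vec\xi,\openone)$, an $m$-variate standard normal with shifted mean. Under this change of variables the test statistic becomes $\hat t(\vec x)=\norm{\Sigma^{-1/2}(S\mu+\vec\eta)-\vec y}_2$, i.e.\ the Euclidean distance in the whitened space between $\vec y$ and the fixed point $\vec c:=\Sigma^{-1/2}(S\mu+\vec\eta)$. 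Writing the true offset as $\vec\xi = S\lambda+\vec\eta$, the whitened mean is $\vec m:=\Sigma^{-1/2}(S\lambda+\vec\eta)$, and the displacement from the reference point is $\vec c-\vec m = \Sigma^{-1/2}S(\mu-\lambda)$. Because $\vec y - \vec m$ is standard normal, $\norm{\vec y-\vec c}_2^2 = \norm{(\vec y-\vec m)+(\vec m-\vec c)}_2^2$ is by definition a noncentral $\chi^2$ variate with $m$ degrees of freedom and noncentrality parameter $\delta^2:=\norm{\vec m-\vec c}_2^2=\norm{\Sigma^{-1/2}S(\mu-\lambda)}_2^2=(\mu-\lambda)^T S^T\Sigma^{-1}S(\mu-\lambda)$. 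Hence $\mathbb{P}[\hat t(\vec x)\le t] = q_m(t^2,\delta^2)$ for each fixed $\lambda$.

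Next I would carry out the supremum over the hypothesis set. Since $s\mapsto q_m(s,u)$ is monotonically decreasing in the noncentrality parameter $u$ (a larger shift only pushes mass away from the origin), maximizing $\mathbb{P}[\hat t\le t]$ over $\{\lambda\mid\norm{\mu-\lambda}_2\ge r\}$ amounts to \emph{minimizing} $\delta^2=(\mu-\lambda)^T S^T\Sigma^{-1}S(\mu-\lambda)$ subject to $\norm{\mu-\lambda}_2\ge r$. Writing $\vec v=\mu-\lambda$, this is $\min\{\vec v^T A\vec v \mid \norm{\vec v}_2\ge r\}$ with $A=S^T\Sigma^{-1}S\succeq 0$. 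Because $A$ is positive semidefinite, $\vec v^T A\vec v$ is nondecreasing along rays, so the minimum over $\norm{\vec v}_2\ge r$ is attained on the sphere $\norm{\vec v}_2=r$, where it equals $r^2\lambda_{\min}(A)$ by the Rayleigh quotient characterization (take $\vec v$ proportional to the eigenvector of $A$ with smallest eigenvalue). Substituting this worst case back yields $p(t,r)=q_m[t^2, r^2\lambda_{\min}(S^T\Sigma^{-1}S)]$, which is the claim.

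The routine parts are the change of variables and the Rayleigh-quotient optimization; the only point deserving care is the monotonicity of $q_m(s,\cdot)$ in the noncentrality parameter, which is what lets the ``$\ge r$'' constraint collapse to the sphere and justifies taking the smallest eigenvalue rather than the largest. I expect that monotonicity statement, together with checking that it is indeed the infimum (not supremum) of $\delta^2$ that governs the worst case, to be the main conceptual obstacle; everything else is bookkeeping. One should also note the degenerate case $\lambda_{\min}(S^T\Sigma^{-1}S)=0$ (when $S$ has a nontrivial kernel): then $\delta^2$ can be made $0$ even on the sphere, $q_m(t^2,0)$ is the central $\chi^2$ CDF, and the bound is still correct, so no separate argument is needed.
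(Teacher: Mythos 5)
Your argument is correct and follows essentially the same route as the paper: whiten the data so that the test statistic becomes the Euclidean distance from a standard normal to a shifted point, identify its square as a noncentral $\chi^2$ variate with noncentrality $\norm{\Sigma^{-1/2}S(\mu-\lambda)}_2^2$, and use monotonicity of $q_m$ in the noncentrality parameter to collapse the supremum to the Rayleigh-quotient minimum $r^2\lambda_{\min}(S^T\Sigma^{-1}S)$. The remark on the degenerate kernel case is a small addition not present in the paper but consistent with it.
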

\begin{proof}
By our assumptions, we have
\begin{equation}\label{yet-another-pvalue}
 p(t, r)= \sup_{\lambda}\set{ \mathbb{P}[\hat{t}\le t] | 
 \norm{\mu- \lambda}_2\ge r}\,,
\end{equation}
where $\hat t$ takes $\vec x$ distributed according to the normal distribution
$N(\vec\xi, \Sigma)$. Then, $\vec y= \Sigma^{-1/2}(\vec x- \vec\xi)$ is normal-distributed 
with offset $\vec 0$ and covariance matrix $\openone$, and we find
\begin{equation}\begin{split}
 p(t,r)&= \sup_{\upsilon}\set{
 \mathbb{P} [\norm{\vec y- \Sigma^{-1/2}S\upsilon}_2\le t : \vec y \sim N(\vec 0,\openone) ] | 
 \norm{\upsilon}_2\ge r} \\
 &\equiv \sup_{\upsilon}\set{
 q_m(t^2, \norm{\Sigma^{-1/2}S\upsilon}_2^2)
 | \norm{\upsilon}_2\ge r}\,,
\end{split}\end{equation}
where we used the definition of $q_m(s, u)$ in the second step. In
order to see 
that this yields Eq.~\eqref{pvalue}, it is enough to note 
that $q_m(s, u)$ is a monotonously decreasing function in $u$~\cite{Johnson1994}.
\end{proof}

The cumulative distribution function of the noncentral $\chi^2$-
distribution 
is given in terms of Marcum's $Q$-function as $q_m(s, u)= 1-
Q_{m/2}(\sqrt 
u,\sqrt s)$. In particular,
\begin{equation}\label{Isumform}
 q_m(s, u)= \mathrm e^{-u/2}\sum_{l=0}^\infty
\frac{(u/2)^l}{l!} 
 \tilde\gamma(l+\tfrac m2,\tfrac s2)\,,
\end{equation}
where $\tilde\gamma$ is the regularized lower incomplete gamma function, so that $s\mapsto q_m(s, 0)$ is the cumulative distribution
function of the central $\chi^2$-distribution\footnote{From Eq.~\eqref{Isumform}
it is 
also easy to see that $f_k\colon u\mapsto q_{2k}(s,2u)$ is a
decreasing 
function in $u$, since $\partial_u f_k(u)= f_{k+1}(u)-f_k(u)$ and 
$\tilde\gamma(a,z)- \tilde\gamma(a+1,z)= \partial_z\tilde\gamma(a+1,z)=
\mathrm 
e^{-z}z^a/\Gamma(a+1)\ge0$.}.

Lemma \ref{getpvalue} is presented in a generic form. 
At variance with the $p$-value defined in the main text [cf. Eq.~\eqref{p-value-def}], note that the supremum in Eq.~\eqref{yet-another-pvalue} is taken over a larger space, as $\lambda$ is not affected by any positivity constraint. Hence, Eq.~\eqref{pvalue} is an upper bound of Eq.~\eqref{p-value-def}. 
In the context of the quantum experiment at hand, 
$\mu$ and $\lambda$ are Hermitian operators, and 
we make the identifications $T(\rho_0) \equiv S(\mu) + \vec\eta$ and $T(\rho_{\rm exp}) \equiv S(\lambda) + \vec\eta$, where $\rho_0$ is the target state with associated radius $r_0$, and $\rho_{\rm exp}$ is the actual state prepared in the experiment. 
Then, we find that $r_0^2\, \lambda_\mathrm{min}(S^T\Sigma^{-1}S)\equiv r_1^2$ [cf. Eq. \eqref{CDF}].

We now assume that the data is of the form $x_k^\ell=n_k^\ell/\sum_j n_j^\ell$, 
where $n_k^\ell$ is the number of events for outcome $k$ and measurement 
setting $\ell$. Then, the elements of the vector $T(X)$ are $\tr(E_k^\ell X)$, 
given the effects $\{E_k^\ell\}$ with $\sum_k E_k^\ell=\openone$, i.e., $T$ 
takes quantum states to probabilities, and $S$ is the restriction of $T$ to the 
traceless operators. This determines $\vec\eta= T[\openone/\tr(\openone)]$. 
Note that one effect per measurement setting shall not be included in $S$, 
since its associated outcome will not be independent and this will lead to 
$r_1=0$.
Note also that introducing the restricted map $S$ is necessary in order to have a nonsingular covariance matrix $\Sigma$ with a well defined inverse.
With this in mind, one can determine the number of degrees of freedom $m$ for the examples of qutrits and ququarts described in the main text. In the case of qutrits, measuring locally a complete set of mutually unbiased bases\footnote{Two orthonormal bases $\{\ket{e_i}\}_{i=1}^d$ and $\{\ket{f_i}\}_{i=1}^d$ are called \emph{mutually unbiased} if it holds that $|\langle e_j | f_k\rangle|^2=1/d$ $\forall j,k=1,\ldots,d$.} (MUBs) in principle produces data of dimension 144 (we have 4 settings per party with 3 outcomes per setting), but eliminating dependent outcomes reduces this dimension to $m=128$. 
For the sake of completeness, we give an explicit construction of the MUBs $\{M_\ell\}_{\ell=0}^3$, where each row in $M_\ell$ corresponds to a basis vector. Then, $M_0$ is the identity matrix,
\begin{equation}
M_1=\frac{1}{\sqrt{3}}
\begin{pmatrix}
1 & 1 & 1 \\
1 & \omega^2 & \omega \\
1 & \omega & \omega^2
\end{pmatrix} \,,\quad
M_2=\frac{1}{\sqrt{3}}
\begin{pmatrix}
1 & 1 & \omega \\
1 & \omega^2 & \omega^2 \\
1 & \omega & 1
\end{pmatrix} \,,\quad
M_3=\frac{1}{\sqrt{3}}
\begin{pmatrix}
1 & 1 & \omega^2 \\
1 & \omega^2 & 1 \\
1 & \omega & \omega
\end{pmatrix} \,,
\end{equation}
and $\omega=\exp(2\pi i/3)$.
Likewise, for the ququart case we assume that the measurement settings for each party are built as pairs of local Pauli measurements $\{\sigma_\mu,\sigma_\nu\}$, with $\mu,\nu=1,2,3$, hence we have 9 settings per party with 4 outcomes per setting, which leads to a reduced dimension of $m=1215$.

For the moment, let us assume that the measurement outcomes $k$ for each 
setting $\ell$ follow a Poisson distribution with parameters $N^\ell 
T(\rho)^\ell_k$, so that data is now of the form $x^\ell_k=n^\ell_k/N^\ell$, 
where $N^\ell$ is the mean total value of events for measurement setting 
$\ell$. Then, the covariance matrix is a simple diagonal matrix with entries 
$T(\rho)^\ell_k/N^\ell$. In this situation, also the elimination of dependent 
outcomes is not necessary since $\Sigma$ is invertible.
This procedure is not optimal. If $N^\ell$ is large enough, we can well 
approximate $N^\ell$ by $\sum_k n_k^\ell$, at the cost of $\Sigma$ becoming 
singular. However, after the dimensional reduction introduced above, the 
reduced covariance matrix is again invertible. The advantage of this approach 
is that we reduced the dimension $m$ without discarding data, and hence the 
overall statistical performance is increased. This is due to the fact that 
$r_1$ and $r_2$ in this approach are the same as in the previous one. Note that 
an analysis using multinomial statistics yields the same reduced covariance 
matrix and hence the same parameters $m$, $r_1$, and $r_2$.
For our examples in the main text we use the latter approach and assume 
$N^\ell\equiv n$ for all settings $\ell$. The parameters used for computing 
$p_{\rm fail}$ in Fig.~\ref{fig:fig2} are computed to be $r_1^2\approx 
0.0664^2n$ and $r_2^2\approx 0.0416^2n$ for the qutrit example, and 
$r_1^2\approx 0.0856^2n$ and $r_2^2\approx 0.0469^2n$ in the ququart case. 

\newpage

\bibliographystyle{quantumrefs}
\bibliography{ExpBoundEntVerification}

\begin{thebibliography}{10}
\expandafter\ifx\csname urlstyle\endcsname\relax
  \providecommand{\doi}[1]{doi:\discretionary{}{}{}#1}\else
  \providecommand{\doi}{doi:\discretionary{}{}{}\begingroup
  \urlstyle{rm}\Url}\fi

\bibitem{Horodecki1998}
M.~Horodecki, P.~Horodecki, and R.~Horodecki, \emph{{Mixed-State Entanglement
  and Distillation: Is there a “Bound” Entanglement in Nature?}}, Physical
  Review Letters \textbf{80}, 5239 (1998), \doi{10.1103/PhysRevLett.80.5239}.

\bibitem{Horodecki2005}
K.~Horodecki, M.~Horodecki, P.~Horodecki, and J.~Oppenheim, \emph{{Secure Key
  from Bound Entanglement}}, Physical Review Letters \textbf{94}, 160502
  (2005), \doi{10.1103/PhysRevLett.94.160502}.

\bibitem{Horodecki1999a}
P.~Horodecki, M.~Horodecki, and R.~Horodecki, \emph{{Bound Entanglement Can Be
  Activated}}, Physical Review Letters \textbf{82}, 1056 (1999),
  \doi{10.1103/PhysRevLett.82.1056}.

\bibitem{Masanes2006}
L.~Masanes, \emph{{All Bipartite Entangled States Are Useful for Information
  Processing}}, Physical Review Letters \textbf{96}, 150501 (2006),
  \doi{10.1103/PhysRevLett.96.150501}.

\bibitem{Czekaj2015}
{\L}.~Czekaj, A.~Przysi{\c{e}}{\.{z}}na, M.~Horodecki, and P.~Horodecki,
  \emph{{Quantum metrology: Heisenberg limit with bound entanglement}},
  Physical Review A \textbf{92}, 062303 (2015),
  \doi{10.1103/PhysRevA.92.062303}.

\bibitem{Toth2018}
G.~T{\'{o}}th and T.~V{\'{e}}rtesi, \emph{{Quantum States with a Positive
  Partial Transpose are Useful for Metrology}}, Physical Review Letters
  \textbf{120}, 020506 (2018), \doi{10.1103/PhysRevLett.120.020506}.

\bibitem{Moroder2014a}
T.~Moroder, O.~Gittsovich, M.~Huber, and O.~G{\"{u}}hne, \emph{{Steering Bound
  Entangled States: A Counterexample to the Stronger Peres Conjecture}},
  Physical Review Letters \textbf{113}, 050404 (2014),
  \doi{10.1103/PhysRevLett.113.050404}.

\bibitem{Vertesi2014}
T.~V{\'{e}}rtesi and N.~Brunner, \emph{{Disproving the Peres conjecture by
  showing Bell nonlocality from bound entanglement}}, Nature Communications
  \textbf{5}, 5297 (2014), \doi{10.1038/ncomms6297}.

\bibitem{Horodecki2002}
M.~Horodecki, J.~Oppenheim, and R.~Horodecki, \emph{{Are the Laws of
  Entanglement Theory Thermodynamical?}}, Physical Review Letters \textbf{89},
  240403 (2002), \doi{10.1103/PhysRevLett.89.240403}.

\bibitem{Brandao2008}
F.~G. S.~L. Brandao and M.~B. Plenio, \emph{{Entanglement theory and the second
  law of thermodynamics}}, Nature Physics \textbf{4}, 873 (2008),
  \doi{10.1038/nphys1100}.

\bibitem{Amselem2009}
E.~Amselem and M.~Bourennane, \emph{{Experimental four-qubit bound
  entanglement}}, Nature Physics \textbf{5}, 748 (2009),
  \doi{10.1038/nphys1372}.

\bibitem{Lavoie2010a}
J.~Lavoie, R.~Kaltenbaek, M.~Piani, and K.~J. Resch, \emph{{Experimental bound
  entanglement?}}, Nature Physics \textbf{6}, 827 (2010),
  \doi{10.1038/nphys1832}.

\bibitem{Lavoie2010}
J.~Lavoie, R.~Kaltenbaek, M.~Piani, and K.~J. Resch, \emph{{Experimental Bound
  Entanglement in a Four-Photon State}}, Physical Review Letters \textbf{105},
  130501 (2010), \doi{10.1103/PhysRevLett.105.130501}.

\bibitem{Smolin2001}
J.~A. Smolin, \emph{{Four-party unlockable bound entangled state}}, Physical
  Review A \textbf{63}, 032306 (2001), \doi{10.1103/PhysRevA.63.032306}.

\bibitem{Barreiro2010}
J.~T. Barreiro, P.~Schindler, O.~G{\"{u}}hne, T.~Monz, M.~Chwalla, C.~F. Roos,
  M.~Hennrich, and R.~Blatt, \emph{{Experimental multiparticle entanglement
  dynamics induced by decoherence}}, Nature Physics \textbf{6}, 943 (2010),
  \doi{10.1038/nphys1781}.

\bibitem{Kampermann2010}
H.~Kampermann, D.~Bru{\ss}, X.~Peng, and D.~Suter, \emph{{Experimental
  generation of pseudo-bound-entanglement}}, Physical Review A \textbf{81},
  040304 (2010), \doi{10.1103/PhysRevA.81.040304}.

\bibitem{Dobek2011}
K.~Dobek, M.~Karpi{\'{n}}ski, R.~Demkowicz-Dobrza{\'{n}}ski, K.~Banaszek, and
  P.~Horodecki, \emph{{Experimental Extraction of Secure Correlations from a
  Noisy Private State}}, Physical Review Letters \textbf{106}, 030501 (2011),
  \doi{10.1103/PhysRevLett.106.030501}.

\bibitem{Kaneda2012}
F.~Kaneda, R.~Shimizu, S.~Ishizaka, Y.~Mitsumori, H.~Kosaka, and K.~Edamatsu,
  \emph{{Experimental Activation of Bound Entanglement}}, Physical Review
  Letters \textbf{109}, 040501 (2012), \doi{10.1103/PhysRevLett.109.040501}.

\bibitem{Amselem2013}
E.~Amselem, M.~Sadiq, and M.~Bourennane, \emph{{Experimental bound entanglement
  through a Pauli channel}}, Scientific Reports \textbf{3}, 1966 (2013),
  \doi{10.1038/srep01966}.

\bibitem{Dobek2013}
K.~Dobek, M.~Karpi{\'{n}}ski, R.~Demkowicz-Dobrza{\'{n}}ski, K.~Banaszek, and
  P.~Horodecki, \emph{{Experimental generation of complex noisy photonic
  entanglement}}, Laser Physics \textbf{23}, 025204 (2013),
  \doi{10.1088/1054-660X/23/2/025204}.

\bibitem{DiGuglielmo2011}
J.~DiGuglielmo, A.~Samblowski, B.~Hage, C.~Pineda, J.~Eisert, and R.~Schnabel,
  \emph{{Experimental Unconditional Preparation and Detection of a Continuous
  Bound Entangled State of Light}}, Physical Review Letters \textbf{107},
  240503 (2011), \doi{10.1103/PhysRevLett.107.240503}.

\bibitem{Hiesmayr2013a}
B.~C. Hiesmayr and W.~L{\"{o}}ffler, \emph{{Complementarity reveals bound
  entanglement of two twisted photons}}, New Journal of Physics \textbf{15},
  083036 (2013), \doi{10.1088/1367-2630/15/8/083036}.

\bibitem{Paris2004}
M.~Paris and J.~Řeh{\'{a}}{\v{c}}ek (eds.), \emph{{Quantum State Estimation}},
  vol. 649 of \emph{Lecture Notes in Physics} (Springer Berlin Heidelberg,
  Berlin, Heidelberg) (2004), ISBN 978-3-540-22329-0, \doi{10.1007/b98673}.

\bibitem{Hradil1997}
Z.~Hradil, \emph{{Quantum-state estimation}}, Physical Review A \textbf{55},
  R1561 (1997), \doi{10.1103/PhysRevA.55.R1561}.

\bibitem{James2001}
D.~F.~V. James, P.~G. Kwiat, W.~J. Munro, and A.~G. White, \emph{{Measurement
  of qubits}}, Physical Review A \textbf{64}, 052312 (2001),
  \doi{10.1103/PhysRevA.64.052312}.

\bibitem{Sugiyama2012}
T.~Sugiyama, P.~S. Turner, and M.~Murao, \emph{{Effect of non-negativity on
  estimation errors in one-qubit state tomography with finite data}}, New
  Journal of Physics \textbf{14}, 085005 (2012),
  \doi{10.1088/1367-2630/14/8/085005}.

\bibitem{Schwemmer2015}
C.~Schwemmer, L.~Knips, D.~Richart, H.~Weinfurter, T.~Moroder, M.~Kleinmann,
  and O.~G{\"{u}}hne, \emph{{Systematic Errors in Current Quantum State
  Tomography Tools}}, Physical Review Letters \textbf{114}, 080403 (2015),
  \doi{10.1103/PhysRevLett.114.080403}.

\bibitem{Efron1994}
B.~Efron and R.~J. Tibshirani, \emph{{An introduction to the bootstrap}}
  (Chapman {\&} Hall/CRC) (1994), ISBN 0-412-04231-2.

\bibitem{Shang2013}
J.~Shang, H.~K. Ng, A.~Sehrawat, X.~Li, and B.-G. Englert, \emph{{Optimal error
  regions for quantum state estimation}}, New Journal of Physics \textbf{15},
  123026 (2013), \doi{10.1088/1367-2630/15/12/123026}.

\bibitem{Christandl2012}
M.~Christandl and R.~Renner, \emph{{Reliable Quantum State Tomography}},
  Physical Review Letters \textbf{109}, 120403 (2012),
  \doi{10.1103/PhysRevLett.109.120403}.

\bibitem{Blume-Kohout2012}
R.~Blume-Kohout, \emph{{Robust error bars for quantum tomography}},
  arXiv:1202.5270 [quant-ph]  (2012).

\bibitem{Knips2015}
L.~Knips, C.~Schwemmer, N.~Klein, J.~Reuter, G.~T{\'{o}}th, and H.~Weinfurter,
  \emph{{How long does it take to obtain a physical density matrix?}},
  arXiv:1512.06866 [quant-ph]  (2015).

\bibitem{Suess2016}
D.~Suess, {\L}.~Rudnicki, T.~O. Maciel, and D.~Gross, \emph{{Error regions in
  quantum state tomography: computational complexity caused by geometry of
  quantum states}}, New Journal of Physics \textbf{19}, 093013 (2017),
  \doi{10.1088/1367-2630/aa7ce9}.

\bibitem{Zyczkowski1999}
K.~{\.{Z}}yczkowski, \emph{{Volume of the set of separable states. II}},
  Physical Review A \textbf{60}, 3496 (1999), \doi{10.1103/PhysRevA.60.3496}.

\bibitem{Bandyopadhyay2008}
S.~Bandyopadhyay, S.~Ghosh, and V.~Roychowdhury, \emph{{Robustness of entangled
  states that are positive under partial transposition}}, Physical Review A
  \textbf{77}, 032318 (2008), \doi{10.1103/PhysRevA.77.032318}.

\bibitem{Guhne2009}
O.~G{\"{u}}hne and G.~T{\'{o}}th, \emph{{Entanglement detection}}, Physics
  Reports \textbf{474}, 1 (2009), \doi{10.1016/j.physrep.2009.02.004}.

\bibitem{Baumgartner2006}
B.~Baumgartner, B.~C. Hiesmayr, and H.~Narnhofer, \emph{{State space for two
  qutrits has a phase space structure in its core}}, Physical Review A
  \textbf{74}, 032327 (2006), \doi{10.1103/PhysRevA.74.032327}.

\bibitem{Bertlmann2007}
R.~A. Bertlmann and P.~Krammer, \emph{{Bloch vectors for qudits}}, Journal of
  Physics A: Mathematical and Theoretical \textbf{41}, 235303 (2008),
  \doi{10.1088/1751-8113/41/23/235303}.

\bibitem{Bertlmann2008}
R.~A. Bertlmann and P.~Krammer, \emph{{Bound entanglement in the set of
  Bell-state mixtures of two-qutrits}}, Physical Review A \textbf{78}, 014303
  (2008), \doi{10.1103/PhysRevA.78.014303}.

\bibitem{Bertlmann2008a}
R.~A. Bertlmann and P.~Krammer, \emph{{Geometric entanglement witnesses and
  bound entanglement}}, Physical Review A \textbf{77}, 024303 (2008),
  \doi{10.1103/PhysRevA.77.024303}.

\bibitem{Bertlmann2009}
R.~A. Bertlmann and P.~Krammer, \emph{{Entanglement witnesses and geometry of
  entanglement of two-qutrit states}}, Annals of Physics \textbf{324}, 1388
  (2009), \doi{10.1016/j.aop.2009.01.008}.

\bibitem{Sentis2016b}
G.~Sent{\'{i}}s, C.~Eltschka, and J.~Siewert, \emph{{Quantitative bound
  entanglement in two-qutrit states}}, Physical Review A \textbf{94}, 020302(R)
  (2016), \doi{10.1103/PhysRevA.94.020302}.

\bibitem{Moroder2012}
T.~Moroder and O.~Gittsovich, \emph{{Calibration-robust entanglement detection
  beyond Bell inequalities}}, Physical Review A \textbf{85}, 032301 (2012),
  \doi{10.1103/PhysRevA.85.032301}.

\bibitem{Johnson1994}
N.~L. Johnson, S.~Kotz, and N.~Balakrishnan, \emph{{Continuous Univariate
  Distributions}}, 2nd ed. (John Wiley {\&} Sons, New York) (1994), ISBN
  978-0-471-58495-7.

\end{thebibliography}

\end{document}